\newtheorem{prop}{Proposition}
\begin{document}

\begin{center}\huge{Darwinian Adverse Selection}\footnote{I thank Martin Hellwig, George Loewenstein, Michael
Mandler, Carl Christian von Weizsäcker, Martin Salm, and
participants of the 2014 MPI seminar in Bonn for comments and
questions.}
\end{center}
\begin{center}\emph{Wolfgang Kuhle}\\\emph{Max Planck Institute for Research on
Collective Goods, Kurt-Schumacher-Str. 10, 53113 Bonn, Germany.
Email: kuhle@coll.mpg.de.}
\end{center}

\noindent\emph{\textbf{Abstract:} We develop a model to study the
role of rationality in economics and biology. The model's agents
differ continuously in their ability to make rational choices. The
agents' objective is to ensure their individual survival over time
or, equivalently, to maximize profits. In equilibrium, however,
rational agents who maximize their objective survival probability
are, individually and collectively, eliminated by the forces of
competition. Instead of rationality, there emerges a unique
distribution of irrational players who are individually not fit
for the struggle of survival. The selection of irrational players
over rational ones relies on the fact that all rational players
coordinate on the same optimal action, which leaves them collectively undiversified and thus vulnerable to aggregate risks.}\\
\textbf{Keywords: Maximization, Rationality, Economics, Biology, Group Selection}\\
\textbf{JEL: D81, D01, D03, G11}


\vspace{.2cm}


\section{Introduction}\label{intro}

In economics it is commonplace that ``man's ability to operate as
a logical animal capable of systematic empirical induction was
itself the direct outcome of the Darwinian struggle for
survival".\footnote{\citet{Sam72}, p. 249. See \citet{Fri53},
p.22, and \citet{Sch33} for similar arguments. See \citet{Nel85}
and \citet{Win71} for more differentiated references.} Most
economic models therefore assume that the forces of competition
ensure the elimination of agents who are not capable of
profit-maximizing behavior. One advantage of such rational
maximizing behavior is that it is complementary to mathematical
optimization techniques which accommodate systematic model
building. Moreover, it follows from the laws of diminishing
marginal returns that most maximization problems feature unique
solutions. Accordingly, rational choice yields unique predictions
which serve as a natural reference point.

For the present purpose, we summarize the case for rational choice
models as follows: (i) \emph{The competitive struggle for survival
selects rational agents over those who make mistakes.} Moreover,
(ii) \emph{unlike rational maximizing behavior, which is well
defined in the context of mathematical models, irrational
``choice" is inherently hard to define and thus cannot serve as a
benchmark.}

In this paper, we develop a simple dynamic neoclassical model of
``Darwinian adverse selection''. While the model's assumptions are
neoclassical, its theoretical predictions invert our previous
theses (i) and (ii). The model economy is inhabited by agents who
differ continuously in their ability to make rational choices and
the agents' objective is to ensure their individual survival over
time or, equivalently, to maximize profits. And, indeed, the
rational choice by which individuals maximize their individual
survival probability will be uniquely defined. The fact that
maximizing behavior is well defined, however, has the implication
that rational players are eliminated by the forces of competition,
while irrational player
types, not capable of maximizing, survive.
We derive this result for two standard economic model
environments.

The key intuition is very simple: Suppose agents have to choose
between two ferry boats. One is well maintained, the other is
leaking. All rational agents will therefore board the
well-maintained ferry, while irrational agents, not capable of
maximizing their individual survival probability, may be found on
either boat. Consequently, to ensure that some irrational players
survive, it suffices that one of the two boats does not sink.
Rational players, on the contrary, individually and collectively
depend on only one boat. And thus the probability of some rational
players surviving is strictly lower than the probability of some
irrational players surviving. Put differently, individual
rationality has the consequence that it aligns the choices of
rational players. Accordingly, rational players as a group suffer
from under-diversification.

Compared to our initial theses (i) and (ii), we therefore find
that our model yields rather different results. Rationality with
its unique predictions is the very reason for the extinction of
rational players. Irrational agents, on the contrary, survive not
despite, but because of their individual mistakes. Accordingly, in
relation to our first thesis (i), we derive a simple antithesis
(i'): \emph{The struggle for survival selects irrational agents,
who make mistakes over those who make no mistakes in the struggle
for survival.} Moreover, irrationality serves as the reference
point: there exists only one unique distribution of irrational
behavior, which is well defined in the context of analytical model
building.

In Section \ref{s1}, we introduce the model and present the main
result. In our baseline model, players are faced with a simple
binary choice problem. In Section \ref{s2}, we drive our results
for a capital market setting where agents can choose from an
arbitrary number of alternatives. Finally, we discuss assumptions,
compare our results to the literature on group selection, and
suggest different interpretations. Section \ref{s3} concludes.


\section{Model}\label{s1}

At the beginning of time, entrepreneurs choose between two
actions/``production technologies" $A$ and $B$. After this choice
is made, nature randomly selects between two states $0$ and $1$.
State $0$ is selected with probability $P>1/2$. State $1$ is less
likely and occurs with the residual probability $(1-P)<1/2$. If
state $0$ $(1)$ is selected, then technology $A$ $(B)$ is more
productive than technology $B$ $(A)$. Accordingly, if state $0$ is
selected, the competitive market system eliminates those firms
that chose the inefficient technology $B$.\footnote{An economy
consistent with such an assumption would, for example, involve the
assumption that (i) goods can be sold at price $p\geq1$, (ii) each
entrepreneur has one unit of labor $L$ and needs one ``unit" of
income for subsistence/survival, and (iii) production is
$Y=\tau_{i}L, i=A,B$ with $\tau_A=1,\tau_B=0$, in state $0$ and
$\tau_A=0,\tau_B=1$ in state $1$. This environment can be
generalized incorporating elastic demand, varying returns to
scale, and production involving several inputs.} Likewise in state
$1$, where players with technology $B$ drive those out of the
market who chose $A$. Agents who used the wrong technology are
eliminated. Successful entrepreneurs move on to the next period
where they have to choose once again between two technologies $A$
and $B$. Time $t$ goes from $t=0,1,2...,\infty$.

Entrepreneurs differ in types $\phi\in[0,1]$. These types vary
continuously from rational to completely irrational. More
precisely, an agent $i$ who is of type $\phi$ plays the following
strategies:
$$S_i=\left\{\begin{array}{ll}A & \qquad\text{with probability } \phi\\
B & \qquad\text{with probability } 1-\phi \, .
\end{array}\right.$$

In the following, we identify players' rationality by their innate
probability, $\phi\in[0,1]$, to play the dominant strategy $A$,
which maximizes the individual survival probability.\footnote{That
is, choosing $A$ maximizes both the probability of surviving to
the next period as well as the probability of making a positive
profit since the probability $P>1/2$ with which $A$ is the
superior technology exceeds the probability $1-P<1/2$ with which
$B$ is the superior technology.} Individual profit maximization
therefore requires agents to play $A$ all the time and hence
$\phi=1$ agents are rational. As $\phi$ decreases, agents put less
and less emphasis on the dominant action and eventually agents
with $\phi=0$ reliably play $B$, which minimizes their survival
probability. At the beginning of time, there exists a density one
of each player type $\phi$. Moreover, players of type $\phi$
randomize independently over the two actions $A$ and $B$ such that
we can use the weak law of large numbers to calculate the share of
type $\phi$ agents who play $A$ and $B$ as $\phi$ and $1-\phi$,
respectively.

\begin{prop}\label{p1} Over time, rational player types $\phi=1$ who maximize their individual survival probability, are eliminated
with probability one. All agents $\phi=0$ who minimize their
probability of survival are also eliminated. The probability of
some agents of all other types $\phi\in(0,1)$, who play strategies
that are neither consistent with maximization nor minimization,
surviving to any point in time is equal to one.
\end{prop}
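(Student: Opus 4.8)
The plan is to track, for each type $\phi$, the surviving mass $M_t(\phi)$ of that type as a function of the realized sequence of states. First I would derive the one-period recursion. Conditioning on the state drawn in period $s$: by the law of large numbers, among the type-$\phi$ agents alive at the start of that period a fraction exactly $\phi$ play $A$ and a fraction $1-\phi$ play $B$; if state $0$ is drawn the $B$-players are wiped out, so $M_s(\phi)=\phi\,M_{s-1}(\phi)$, whereas if state $1$ is drawn $M_s(\phi)=(1-\phi)\,M_{s-1}(\phi)$. Since $M_0(\phi)=1$, iterating yields the closed form
\[
M_t(\phi)=\phi^{\,N_0(t)}\,(1-\phi)^{\,N_1(t)},
\]
where $N_0(t)$ and $N_1(t)=t-N_0(t)$ count the periods up to $t$ in which state $0$, respectively state $1$, occurred. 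The key observation is that, under the continuum-of-agents idealization, this expression is \emph{deterministic} given the state history: the surviving fraction of each type each period is exactly $\phi$, with no sampling noise.

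Next I would read off the three cases. For $\phi=1$ the formula gives $M_t(1)=1$ when $N_1(t)=0$ and $M_t(1)=0$ otherwise, so the type-$1$ population is alive at time $t$ precisely when state $0$ occurred in every period up to $t$, an event of probability $P^{\,t}$. Since survival is prefix-closed (once a type is extinct it stays extinct), the events ``type $1$ alive at $t$'' form a decreasing sequence, and by continuity of probability the chance that rational agents survive forever is $\lim_{t\to\infty}P^{\,t}=0$; equivalently, with probability one there is a finite $t$ with $M_t(1)=0$, after which the mass remains zero. This gives the first claim. The case $\phi=0$ is the mirror image: $M_t(0)=1$ iff $N_0(t)=0$, which has probability $(1-P)^{\,t}\to 0$, so the survival-minimizing type is likewise eliminated with probability one.

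Finally, for any interior type $\phi\in(0,1)$ both $\phi$ and $1-\phi$ lie strictly between $0$ and $1$, so $\phi^{\,N_0(t)}(1-\phi)^{\,N_1(t)}>0$ for \emph{every} realization of the state sequence and every finite $t$ — a product of finitely many numbers in $(0,1)$ is positive. Hence $\Pr\!\big(M_t(\phi)>0\big)=1$ for all $t$; in fact the event ``$M_t(\phi)>0$ for every $t$'' is sure. This is exactly the statement that some agents of each intermediate type survive to any point in time with probability one.

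The proof is short, and there is no single hard calculation; the points that require care are (a) the passage from ``alive at each finite horizon with probability tending to zero'' to ``eliminated with probability one'' for $\phi\in\{0,1\}$, which rests on continuity of measure along the decreasing sequence of survival events (equivalently, on $\inf_t P^{\,t}=\inf_t (1-P)^{\,t}=0$), and (b) the modeling assumption invoked in the recursion — that the law-of-large-numbers logic applies afresh each period to the current survivors, so that the surviving fraction of each type is exactly $\phi$ and, for interior $\phi$, the mass stays strictly positive at every finite time and can never reach zero.
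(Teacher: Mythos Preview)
Your proof is correct and follows essentially the same route as the paper: the paper likewise uses the law of large numbers to argue that a share $\phi$ (respectively $1-\phi$) of each type survives each period depending on the state, so that the collective survival probability of types $\phi=1$ and $\phi=0$ is $P^t\to 0$ and $(1-P)^t\to 0$, while for interior $\phi$ a positive share survives in either state and hence the one-period collective survival probability is $P+(1-P)=1$. Your closed-form expression $M_t(\phi)=\phi^{N_0(t)}(1-\phi)^{N_1(t)}$ and the continuity-of-measure remark make the argument a bit more explicit than the paper's, but the substance is the same.
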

\begin{proof} Rational agents $(\phi=1)$, who maximize their individual survival probability,
choose strategy $A$, with its superior survival probability
$P>1/2$, all the time. Hence, $P^t$ is the probability of a
rational agent surviving to period $t$. Indeed rationality iduces
all rational agents to play $A$ such that the \emph{collective
survival probability} for the entire group of rational agents
coincides with the individual survival probability $P$. The
long-run survival probability of rational agents, individually and
collectively, is therefore $\lim_{t\rightarrow\infty}P^t=0$.
Second, the individual and collective survival probability for
minimizing agents $(\phi=0)$ is $(1-P)$. Hence, over time, we have
$\lim_{t\rightarrow\infty}(1-P)^t=0$. Finally, irrational agents
$\phi\in(0,1)$, who do not maximize: The individual survival
probability of a type $\phi$ agent is given by the convex
combination $P\phi+(1-P)(1-\phi)\in[1-P,P]$. That is, as
individuals, irrational agents survive with a probability that
falls short of the rational agent's probability $P$. However, a
share $\phi$ of irrational type $\phi$ players survives in state
$0$, and a share $1-\phi$ in state $1$. The probability of some
type $\phi$ players surviving from one period to the next is
therefore $P+(1-P)=1$.
\end{proof}

Proposition \ref{p1} reflects that the extinction of rational
agents is an immediate consequence of the very fact that rational
agents maximize their individual survival probabilities. In doing
so, they coordinate on the one rational answer that exists to the
struggle for survival that nature presents them with. However,
since individual rationality is well-defined in the present model,
rational players all take the same bet and are thus eliminated
collectively in the rare event where technology $B$ dominates $A$.
Put differently, nature rewards a diversification, which is
individually irrational.

\section{Discussion}\label{s2}

The previous result suggests that the failure of rationality
originates from a lack of diversification. In this section, we
reflect on this finding. At first sight, the binary choice model
from the previous section raises the question whether rational
agents might find some way to diversify once they can choose from
more than just two actions. To address this concern, we show that
our findings also obtain in a standard capital market environment
where rational agents can choose from an arbitrary number of
different assets to build diversified portfolios. Second, we
emphasize the fact that the present model implies the emergence of
equilibrium biases. Third, we relate our conclusions to the
literature on individual and group fitness.

\subsection{Diversification}

Proposition \ref{p1} clearly shows that the survival of irrational
players relies on the fact that rational players indeed perform to
the best of their ability. Accordingly, for every given task, they
come to the same optimal conclusion. And all rational players will
choose the \emph{same} action $A$. Everyone choosing $A$, however,
implies that rational players' choices are \emph{collectively}
perfectly correlated. This aligns their actions perfectly, which
in the presence of aggregate risks, means that they perish
simultaneously. Hence, rational players, \emph{as a collective},
do not diversify. To emphasize this point, we develop a simple
version of the workhorse capital asset pricing model (CAPM) in
Appendix \ref{A1}. In this model, investors maximize the
probability with which their individual asset returns $Y_j$ do not
fall below a minimum $Y^{min}$. As \citet{Sha91} shows, the key
prediction, namely that all rational investors choose the same
``market portfolio", holds under much more general circumstances.
That is, even though investors can choose from infinitely many
different assets, they all choose the same diversified ``market
portfolio", which we call $A$. This portfolio indeed maximizes
their individual survival probability. Moreover, since all
rational agents choose the same optimal portfolio $A$, their
\emph{collective survival probability} is identical to their
individual survival probability. Irrational players play $B$. That
is, they do not diversify and select one individual asset at
random. Hence, their individual survival probability falls short
of that of the rational players. However, as a collective, they
are fully diversified: Regardless of how deep the over all market
falls, a positive measure of irrational agents will survive. Put
differently, it turns out that this model's predictions are in
line with the simple binary choice model from Section \ref{s1}. In
Appendix \ref{A1}, we substantiate the foregoing claims and prove

\begin{prop}\label{p2} All rational investors who maximize their survival probability choose the market portfolio
$A$. And, over time, the probability of rational investors' funds
being closed due to catastrophic losses goes to $1$. The
probability with which a positive measure of irrational investors,
who individually invest all of their wealth into one randomly
drawn asset, survive to any point in time is one.\end{prop}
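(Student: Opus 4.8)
The plan is to follow the three-part structure of the proof of Proposition~\ref{p1}, with the binary technology choice replaced by a portfolio choice and the elementary ``dominant strategy'' observation replaced by a mutual-fund-separation argument. First I would establish that the rational investors' common optimum is the market portfolio $A$; then, exploiting that they all hold the \emph{same} portfolio, I would show that their funds close almost surely in the long run; finally I would show that the irrational investors, precisely because they fail to coordinate, keep a positive measure of survivors alive at every date with probability one.

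For the first step, I would invoke the safety-first objective used in Appendix~\ref{A1}: each rational investor maximizes $\Pr(Y_j \ge Y^{min})$ over the available portfolios. Under the distributional assumptions of that appendix this objective is a strictly monotone transform of a mean--variance criterion, so its maximizer is the tangency portfolio, which in equilibrium is the market portfolio $A$; by \citet{Sha91} the same conclusion survives under substantially weaker hypotheses. The point to be nailed down here is \emph{uniqueness}: every rational investor solves the identical problem and therefore selects the identical portfolio, so that the rational sub-population is perfectly correlated.

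For the second step, let $q := \Pr(Y_A < Y^{min}) > 0$ denote the per-period probability that the market portfolio suffers a catastrophic loss; $q$ is strictly positive because diversification removes idiosyncratic but not aggregate risk. Since all rational investors hold $A$, in any period either all of them are wiped out or none are, so the individual and \emph{collective} per-period survival probabilities both equal $1-q < 1$. As the shocks in distinct periods are independent, the probability that rational funds survive through period $t$ is $(1-q)^t$, whence $\lim_{t\rightarrow\infty}(1-q)^t = 0$; equivalently, by Borel--Cantelli the catastrophic event recurs almost surely.

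For the third step, each irrational investor places all wealth in a single asset drawn at random, so by the weak law of large numbers a positive measure of irrational investors holds each asset $k$. An irrational investor survives a period iff her asset clears $Y^{min}$, so her individual survival probability is an average of single-asset survival probabilities and hence falls short of $1-q$, the rational benchmark (indeed $A$ maximizes the probability of clearing the threshold). Collectively, however, the measure of surviving irrational investors in a period is positive as soon as \emph{some} asset clears $Y^{min}$; under the assumptions of Appendix~\ref{A1} the event that \emph{every} asset simultaneously falls below $Y^{min}$ has probability zero, so a positive measure of irrational investors survives each period almost surely, and by induction over periods a positive measure survives to any finite date with probability one. The main obstacle is the first step --- turning ``all rational investors choose $A$'' into a theorem rather than an assumption --- since the recursion and limiting arguments of the last two steps are then essentially the same bookkeeping as in Proposition~\ref{p1}.
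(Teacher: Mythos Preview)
Your proposal is correct and structurally parallel to the paper's proof, but the execution of the first step differs. You argue abstractly via mutual-fund separation (safety-first is a monotone transform of mean--variance under normality, so every rational investor lands on the same tangency/market portfolio, with \citet{Sha91} invoked for robustness). The paper instead works concretely inside its Gaussian one-factor model $y_i=\theta+\xi_i$: it writes down $P(Y_{jN}>Y^{min})=\Phi\big((\mu-Y^{min})/\sqrt{\sigma^2+\sigma_\xi^2/N}\big)$ and reads off directly that this is strictly increasing in $N$, which forces $N\to\infty$ and hence the equally-weighted market portfolio. Your route is more general and closer to the CAPM literature; the paper's is self-contained and delivers uniqueness without appealing to any outside theorem. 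For step two you actually spell out the limit $(1-q)^t\to 0$ that the paper only states in words. For step three the paper conditions on the realized market factor $\theta$ and observes that the surviving fraction of single-asset investors is $\Phi\big((\theta-Y^{min})/\sigma_\xi\big)>0$ for \emph{every} $\theta$; your argument is equivalent, but the sentence ``a positive measure of irrational investors holds each asset $k$'' is imprecise --- with a continuum of assets the mass on any single asset is zero; what you need (and what the paper exploits) is that a positive \emph{fraction} of assets clears $Y^{min}$ conditional on any $\theta$, and the same fraction of irrational investors therefore survives.
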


Put differently, those investors who behave according to the
predictions of the capital asset pricing model will eventually go
bankrupt in one unlikely market downturn. In this downturn,
however, there are always some outlier assets, and those
irrational investors who chose these outlier assets survive the
downturn.

\subsubsection{Preferences and additional strategies}

From the previous analysis it is clear that the population of
players declines over time. Moreover, there is no safe act that
would allow agents to survive until the next period with
probability one. If we were to introduce safe acts alongside
rewards to risk-taking, which would come in the form of increased
fertility in case the risky strategy pays off, we would obtain the
same results as before if rational players maximize the number of
expected offspring by playing one risky act, ``$A$", all the time.
If rational players were to maximize their survival probability
instead, they would eventually be marginalized in a population
where other player types individually randomize over the safe and
the risky option where expected fertility is higher. Once again,
the key to this finding would lie in the observation that rational
players, regardless of what their aim is, would not have an
incentive to randomize individually. Finally, we note that the
same arguments would apply if there were two groups of rational
players, one maximizing the expected number of children, the other
maximizing the probability of survival.



\subsection{Trembling Hand and Heterogenous Priors}

The foregoing model has shown that an endogenous population of
players evolves over time. And the probability of rational players
being included in this distribution goes to zero as time evolves,
leaving only agents playing trembling hand. The present model thus
provides a framework where the often criticized trembling hand
assumption of \citet{Sel75} is a model outcome rather than an
assumption.

In an alternative interpretation, the present model can explain
the emergence of heterogenous priors. Suppose that all players die
after one period and only those who made the right choice have one
``child". In turn, the child of a type $\phi$ agent will, with
probability $\phi$ (respectively $1-\phi$), hold the prior belief
that the probability $P$ with which action $A$ dominates $B$ is
$P>1/2$ $(P<1/2)$. By the law of large numbers, we would have a
stable ``sex ratio" and a share $\phi$ play $A$ while a share
$1-\phi$ play $B$. For each history of events, agents will
therefore know that they live in a society that agrees to
disagree. In this interpretation, the present model explains the
empirical fact that agents ``agree to disagree", which - as
\citet{Aum76} points out - is otherwise hard to justify.

\subsection{Group Selection and Absolute Individual Fitness}

In a different interpretation, the present model supports the
concept of group selection rather than selection based on
individual absolute fitness. That is, over time, rational agents,
who are individually the fittest, perish collectively with
probability one. On the contrary, irrational agents, who are
individually not capable of maximizing their survival
probabilities, prevail collectively with probability one. At the
same time, the present model does not rely on any of the
traditional arguments for group selection. That is, the model
involves no public good argument. Moreover, agents do not (and
cannot) act on any altruistic motives; choosing the dominated
action $B$ does not increase the survival probability of any of
the other players.\footnote{\citet{Wil04,Wil75,Wil07},
\citet{Que92}, \citet{Sam93} and \citet{Eld08} for the role of
public goods, (implicit) cooperation, and altruism in the context
of group selection models.} Nor does the model involve arguments
concerning the wastes of competition that would apply if, for
example, one agent's choice of $A$ lowered the other agent's
survival probabilities once they also play $A$. And there are no
rewards for risk-taking. The argument is not based on risk
preferences or strategic considerations either, such as prisoners
dilemmas or hawk-dove tradeoffs where, even though hawks can
exploit doves, not all agents choose to be hawks since hawk-hawk
encounters can be deadly.\footnote{\citet{Wol07}.} Finally,
rationality is costless. Instead of these previous arguments, the
present selection argument relies on the fact that nature rewards
a diversification which rationality, with its unique predictions,
cannot deliver.



\section{Conclusion}\label{s3}

Individual rationality with its unambiguous predictions makes
model building operational. Moreover, since rational agents
outperform irrational ones, the forces of competition will
eliminate irrational agents over time. In the current model,
rational players indeed perform to the best of their ability.
Accordingly, for every given task, they come to the same optimal
conclusion. As discussed, if $A$ is the well diversified ``market
portfolio" and $B$ an inferior, badly diversified portfolio, then
all rational players choose the \emph{same} diversified market
portfolio $A$. Everyone holding $A$, however, implies that
rational players' choices are \emph{collectively} perfectly
correlated. This aligns their actions perfectly, which means in
the presence of aggregate risks, that they perish simultaneously.
Hence, rational players, \emph{as a collective}, do not diversify.
Accordingly, over time, they will eventually be washed away by one
unforseen regime change.

In our model, irrational agents simply invest all their wealth
into one randomly chosen asset. This strategy minimizes their
individual probability of success. At the same time, this behavior
ensures that some irrational agents survive in all states of
nature. Hence, even though irrational players may be outperformed
by their rational counterparts for long periods of time, they
survive those unlikely events where uncomprehensible choices pay
off.


\newpage

\begin{appendix}
\section{Proof of Proposition \ref{p2}}\label{A1}
In this appendix, we first derive the main prediction of the CAPM,
namely that all rational players will hold the same ``market
portfolio", which we call $A$. This portfolio minimizes the
probability with which investors suffer a ``catastrophic loss". As
in the baseline model, investors suffering catastrophic losses are
eliminated from the capital market. Second, we show that rational
investors who diversify optimally have a ``survival probability"
that exceeds the survival probability of irrational investors who
do not diversify. Finally, we show that a positive share of
irrational investors survives even the worst market downturns with
probability one. Rational investors on the contrary are eliminated
in severe downturns.

\textbf{Market:} There exists a continuum of assets:
\begin{eqnarray} y_i=\theta+\xi_i,\quad i\in[0,1],\quad \theta\sim\mathcal{N}(\mu,\sigma^2),  \label{a1}\end{eqnarray}
where $\xi_i$ is i.i.d. white noise
$\xi_i\sim\mathcal{N}(0,\sigma_{\xi}^2)$. Clearly $\theta$
represents the general ``market risk", which is common to all
assets, and $\xi_i$
is the idiosyncratic risk associated with a particular asset $i$. 
In the following, each agent $j$ can choose a portfolio to
minimize the probability that returns $Y_j$ fall short of a
minimum requirement $Y^{min}<\mu$. Investors receiving
$Y_j<Y^{min}$ go bankrupt. Respectively, if we think of the
investor as a fund manager, the fund is closed due to poor
performance if $Y_j<Y^{min}$. Investors receiving $Y_j>Y^{min}$
move on to the next period where the game is repeated... To
substantiate the claims from the main text, we have to show that
there exists one unique optimal ``market portfolio" $A$, which
maximizes the objective survival probability $P(Y_A>Y^{min})$.
Finally, without loss of generality, we normalize all assets'
prices to one, and each investor can invest one unit of currency.
To derive the optimal portfolio, we note that the investor $j$ can
buy the following portfolios:
\begin{eqnarray} &&Y_{j1}=y_1,\quad Y_{j1}\sim\mathcal{N}(\mu,\sigma^2+\sigma_{\xi}^2)\label{1}\\
&&Y_{j2}=\frac{1}{2}y_1+\frac{1}{2}y_2,\quad Y_{j2}\sim\mathcal{N}(\mu,\sigma^2+\frac{1}{2}\sigma_{\xi}^2)\\
&&Y_{j3}=\frac{1}{3}y_1+\frac{1}{3}y_2+\frac{1}{3}y_3,\quad
Y_{j3}\sim\mathcal{N}(\mu,\sigma^2+\frac{1}{3}\sigma_{\xi}^2)\\
&&Y_{jN}=\sum_{n=1}^{N}\frac{1}{N}y_n,\quad
Y_{jN}\sim\mathcal{N}(\mu,\sigma+\frac{1}{N}\sigma_{\xi}^2).
\label{2}\end{eqnarray} \textbf{Rational Choice:} It follows from
(\ref{1})-(\ref{2}) that
$P(Y_{jN}>Y^{min})=\Phi\Big(\sqrt{\frac{1}{\sigma^2+\frac{1}{N}\sigma^2_{\xi}}}(\mu-Y^{min})\Big)$,
where $\Phi()$ is the cumulative density function of the standard
normal distribution. Thus, the survival probability of a fund
manager is monotonously increasing in $N$ since we assumed that
$\mu>Y^{min}$. Accordingly, rational investors will choose
$N=\infty$ to maximize their survival probability. That is, they
include a small amount of every asset $x_i, i\in[0,1]$ in their
portfolio to achieve maximum diversification. This means that
\emph{all rational fund managers} choose the same ``market
portfolio" as predicted by the standard capital asset pricing
model. Once we call this market portfolio portfolio $A$, we have
substantiated the claim from the main text that all rational
managers choose $A$, which puts their individual and collective
survival probability to
$P_A=\Phi\Big(\sqrt{\sigma^2}(\mu-Y^{min})\Big)$.

\textbf{Irrational Agents:} As before, there exists a measure one
of irrational players who do not diversify. They simply invest
their total wealth into one asset which they pick at random from
the continuum of assets $j\in[0,1]$. We call this choice $B$.
Accordingly, the survival probability of an individual irrational
manager is
$P(Y_j>Y^{min})=\Phi\Big(\sqrt{\frac{1}{\sigma^2+\sigma^2_{\xi}}}(\mu-Y^{min})\Big)<P_A$.
The \emph{collective} survival probability of a mass one of
irrational players, however, is again equal to one. That is, for
every given draw $\theta$, we have the distribution of individual
asset returns $y_i|\theta\sim\mathcal{N}(\theta,\sigma_{\xi}^2)$,
and there is always a positive mass
$P(y>Y^{min}|\theta)=\Phi\Big(\sqrt{\frac{1}{\sigma_{\xi}^2}}(\theta-Y^{min})\Big)>0$
of irrational players, who receive a return in excess of the
required minimum.

Put differently, regardless of how deep the general ``market",
$\theta$, falls, there are always some outlier assets, which
deliver a return sufficient to ensure that $Y_i=x_i>Y^{min}$.
Hence, there are always surviving irrational agents. The same is
not true for rational agents, who hold the market portfolio $A$,
which yields
$Y_A=\int_{[0,1]}x_idi=\theta+\int_{[0,1]}\xi_idi=\theta$. Or, in
terms of (\ref{2}), we have $\lim_{N\rightarrow\infty}Y_N=\theta$.
That is, once $\theta<Y^{min}$, which happens with probability
$1-P_A>0$, all rational agents perish simultaneously.

\end{appendix}

\begin{appendix}
\addcontentsline{toc}{section}{References}
\markboth{References}{References}
\bibliographystyle{apalike}
\bibliography{References}

\end{appendix}

\end{document}